\documentclass[11pt,a4paper]{article}
\usepackage[T1]{fontenc}
\usepackage[utf8]{inputenc}
\usepackage[margin=1in]{geometry}
\usepackage{mathptmx,courier}
\usepackage{amsmath,amsfonts}
\usepackage{graphicx}
\usepackage[font=small,labelfont=bf]{caption}
\usepackage{subfigure}
\usepackage{booktabs}
\usepackage{comment}
\usepackage{amsfonts,amsmath,amsthm}
\usepackage{microtype}
\usepackage[numbers,sort&compress]{natbib}

\usepackage[hyphens]{url}
\usepackage[hidelinks]{hyperref}

\theoremstyle{plain}
\newtheorem{proposition}{Proposition}
\newtheorem{corollary}{Corollary}

\theoremstyle{remark}
\newtheorem{remark}{Remark}

\hypersetup{
  pdftitle={Breaking Behavioral Uniformity in Traffic Flow: Analytical Derivation and Numerical Verification of Optimal Cooperative Buffering},
  pdfauthor={Raphael Korbmacher and Antoine Tordeux},
  pdfkeywords={traffic flow, cooperative driving, string stability, car-following, optimization}
}
\newcommand{\keywords}[1]{\par\smallskip\noindent\begingroup\small
  \def\and{; }\textbf{Keywords:} #1\par\endgroup\medskip}
\newenvironment{acknowledgements}{\section*{Acknowledgements}}{}

\title{Breaking Behavioral Uniformity in Traffic Flow:\\
Analytical Derivation and Numerical Verification of Optimal Cooperative Buffering}
\author{Raphael Korbmacher \quad Antoine Tordeux\\[6pt]
\small School of Mechanical Engineering and Safety Engineering\\
\small University of Wuppertal, 42119 Wuppertal, Germany\\
\small\href{mailto:korbmacher@uni-wuppertal.de}{korbmacher@uni-wuppertal.de}}
\date{}
\begin{document}
\maketitle

\begin{abstract}

Stop-and-go waves pose a persistent challenge in traffic flow theory, significantly compromising highway safety, operational efficiency, and environmental sustainability. This paper mathematically derives an optimal cooperative driving policy based on the premise that connected vehicles can cooperate and assume distinct roles. In doing so, we depart from the classical assumption of uniform vehicular behavior. Through a rigorous optimization formulation constrained by linear stability conditions, it is analytically proved that the traffic system's throughput is globally maximized when a single designated "buffer" vehicle maintains an enlarged headway to absorb perturbations, enabling the remaining vehicles to form a tightly spaced, high-capacity platoon. This analytically derived optimum is confirmed via numerical spectral analysis and stability-constrained flow optimization using the Intelligent Driver Model and the Full Velocity Difference Model. The results demonstrate that this non-uniform cooperative buffering strategy yields dual benefits. It radically expands the string-stable parameter space to actively dissipate SGWs, while simultaneously achieving significantly higher traffic flow compared to both unbuffered traffic and non-cooperative Jam-Absorption Driving strategies.
\end{abstract}

\keywords{Traffic flow \and Cooperative driving \and String stability \and Car-following models \and Constrained optimization}

\section{Introduction}
Traffic jams represent an undesirable traffic condition with numerous negative consequences, including economic losses, increased fuel consumption, and higher pollutant emissions~\cite{nishi2013theory,li2014stop}. While congestion can be triggered by explicit physical bottlenecks, accidents, or construction sites, it frequently emerges spontaneously in dense traffic for no apparent reason. These spontaneous breakdowns are commonly referred to as phantom jams or stop-and-go waves (SGWs). Despite over seven decades of research into traffic flow dynamics, the dissipation of these self-sustaining wave formations remains a complex and persistent endeavor.

Strategies to mitigate SGWs have traditionally been pursued along two complementary lines~\cite{nishi2013theory}. Macroscopic strategies utilize traffic management tools, such as variable speed limits, to regulate aggregate traffic flow~\cite{hegyi2005optimal,hegyi2010dynamic}. In contrast, microscopic approaches focus on how individual driving behaviors influence the formation or dissipation of SGWs. The collective dynamics of these microscopic interactions are typically evaluated through traffic string stability analysis, which formally defines traffic as unstable if microscopic velocity perturbations amplify as they propagate upstream through a vehicle platoon~\cite{korbmacher2025understanding}.

That human-driven traffic is indeed string-unstable has been demonstrated in a series of controlled experiments. The seminal circuit-road study by Sugiyama et al.~\cite{sugiyama2008traffic} showed that vehicles on a closed-loop track spontaneously transition into SGW due to small fluctuations in driver behavior, even without bottlenecks or external disruptions. Subsequent experiments have consistently confirmed this fundamental instability~\cite{tadaki2013phase,jiang2014traffic,stern2018dissipation}. One microscopic approach to counteract these instabilities is Jam-Absorption Driving (JAD) strategy, in which individual drivers maintain enlarged time gaps or adopt ``slow-in, fast-out'' maneuvers to absorb upcoming waves~\cite{nishi2013theory,wang2021jam,he2025review}.

Beyond behavioral interventions, considerable expectations have been placed on technological solutions, particularly Adaptive Cruise Control (ACC) and autonomous vehicles~\cite{treiber2001microsimulations,davis2004effect}. Yet empirical investigations have revealed that commercially deployed ACC systems are themselves string unstable~\cite{gunter2020commercially,makridis2020empirical,ciuffo2021requiem}. Researchers broadly agree that vehicle-to-vehicle (V2V) communication will be necessary to overcome these limitations~\cite{milanes2013cooperative,de2004design}.
Cooperative Adaptive Cruise Control (CACC) systems leverage such communication to reduce effective response times~\cite{ploeg2013controller,zheng2015stability,orosz2016connected}. However, even CACC strategies share the common structural assumption that all connected vehicles employ the same controller with identical parameters, achieving stability through richer information exchange rather than through deliberate cooperation in the control policy itself.

\begin{figure}[tbp]
    \centering
    \begin{minipage}{0.47\textwidth}
        \centering
        \subfigure[]{\includegraphics[width=\textwidth]{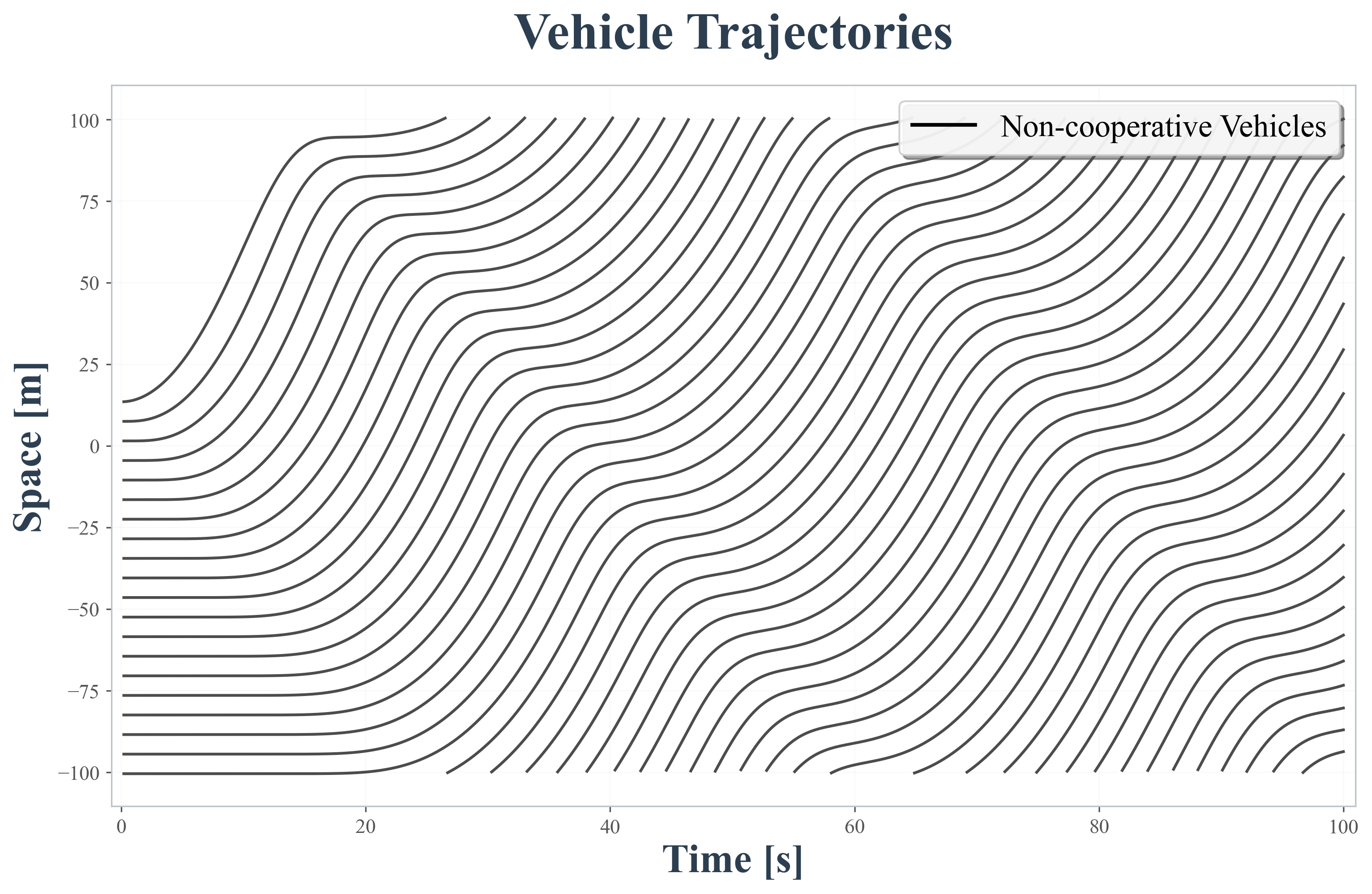}} \\ \vspace{1mm}
        \subfigure[]{\includegraphics[width=\textwidth]{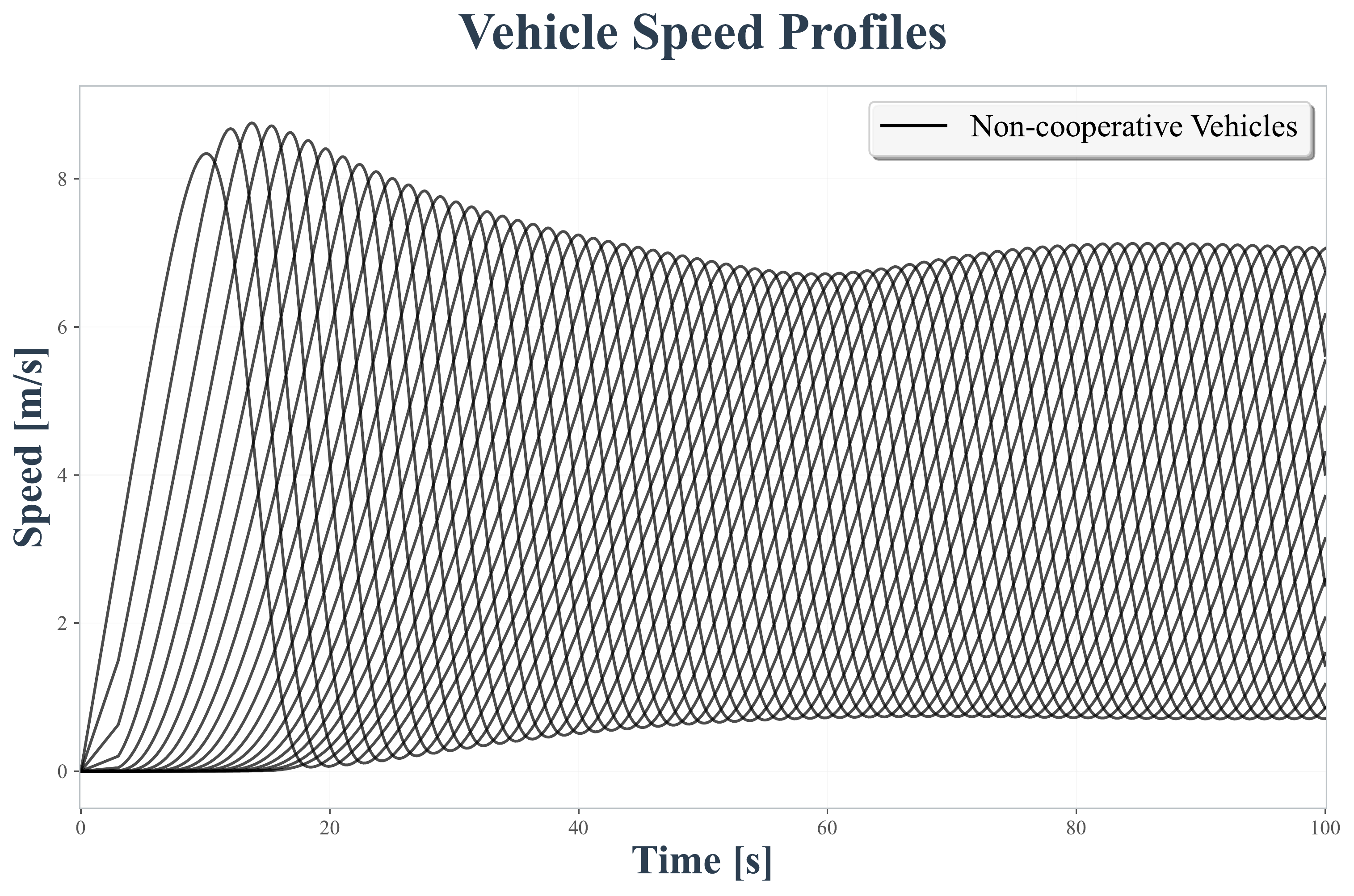}}
    \end{minipage}
    \hfill
    \begin{minipage}{0.47\textwidth}
        \centering
        \subfigure[]{\includegraphics[width=\textwidth]{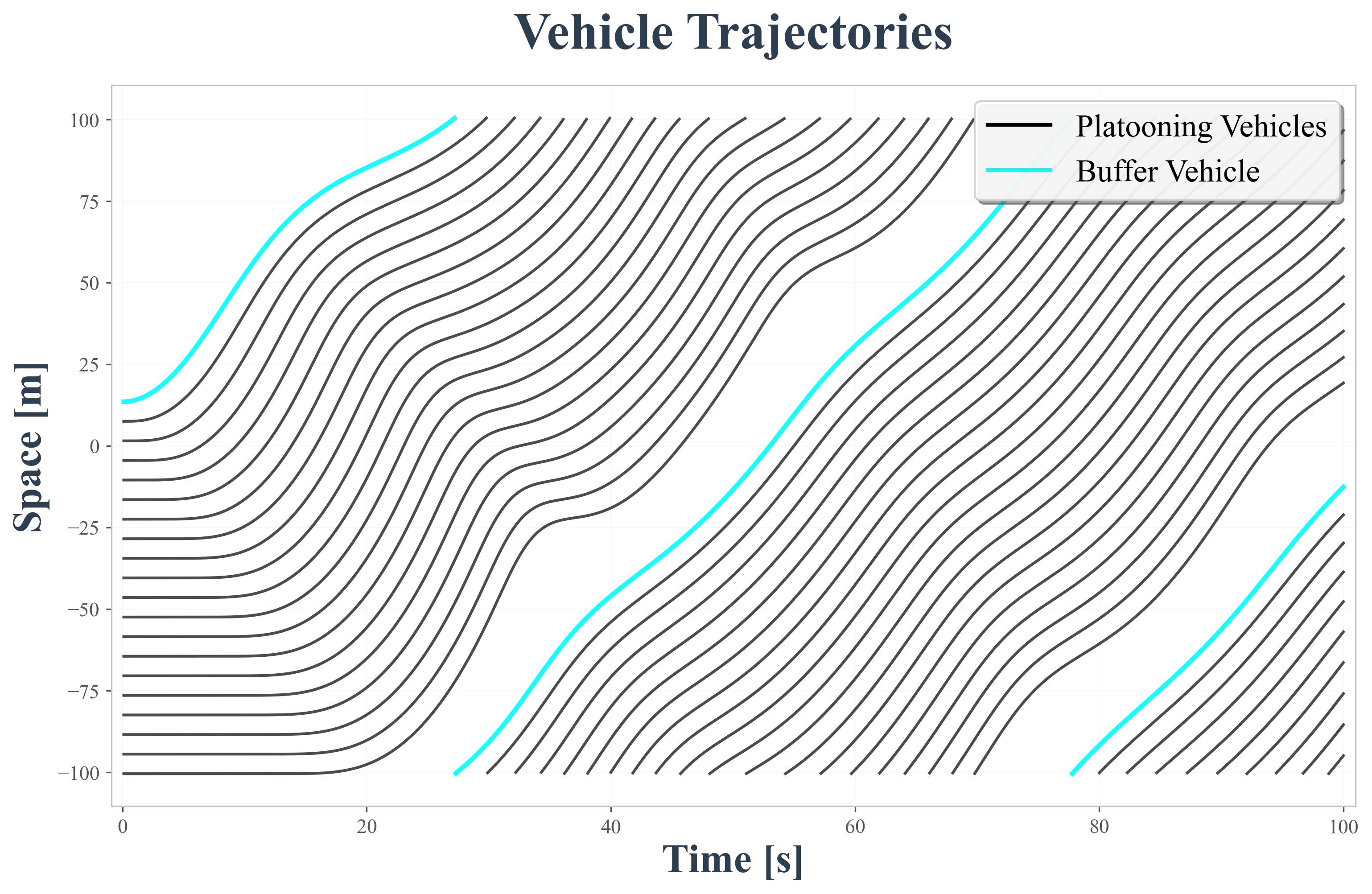}} \\ \vspace{1mm}
        \subfigure[]{\includegraphics[width=\textwidth]{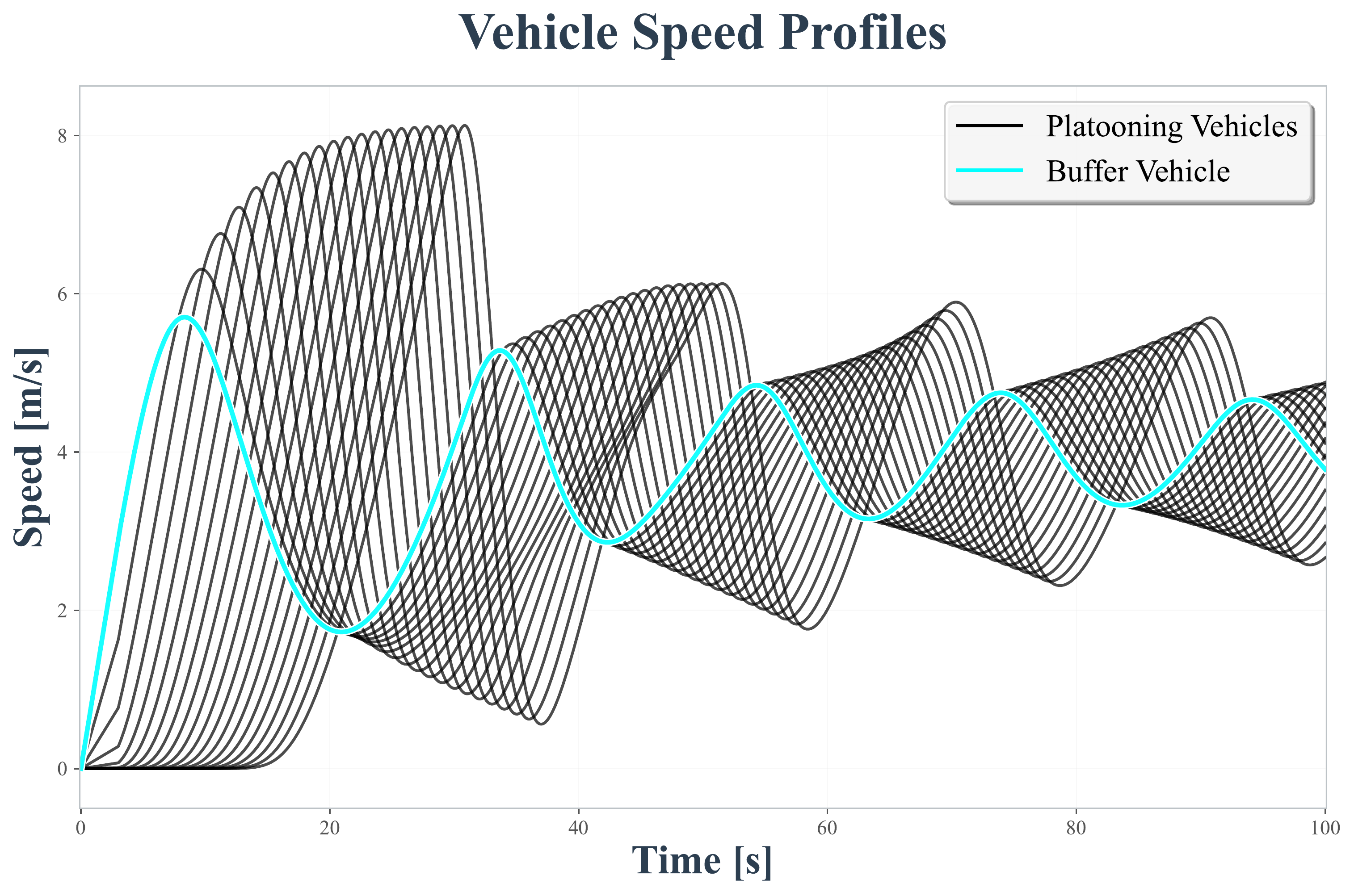}}
    \end{minipage}

    \caption{Simulation results using identical IDM parameter settings, without (left panels) and with (right panels) the buffering cooperative strategy. Both scenarios start from the same initial traffic jam condition. In (a), the trajectories of 20 non-cooperative vehicles are shown, with the corresponding speed profiles in (b). A stop-and-go wave emerges. The right panels (c, d) display the setup with the buffering cooperative strategy, where the flow tends to be laminar. The simulations can be computed online~\cite{SimBufferingStrategy}.}
    \label{fig_strategy}
\end{figure}

This paper challenges the foundational assumption of behavioral uniformity. Mathematically, a cooperative driving strategy is derived, and it is analytically proved that macroscopic throughput is globally maximized when a single designated vehicle assumes an enlarged headway, acting as a ``buffer'', while the remaining fleet cooperatively
forms a platoon. Through a rigorous optimization
formulation constrained by linear string stability conditions, we  establish that this non-uniform configuration is the global optimum. This analytical result is confirmed via numerical spectral analysis and maximum-stable-flow computations using the Intelligent Driver Model~(IDM) and the Full Velocity Difference Model~(FVDM), demonstrating that the strategy both expands the string-stable parameter space and achieves substantially higher macroscopic throughput than uniform baselines. This reframes the problem from cooperative controller design to cooperative \emph{role allocation} under stability constraints. The core mechanism and the SGW dissipation effect are illustrated in Figure~\ref{fig_strategy} as a proof-of-concept.

\section{Mathematical Formalization of Optimal Cooperative Buffering}
\label{sec:optimization}

Analysing traffic stability has a longer history, beginning with the first car-following models in the 1950s, with pioneering contributions from Reuschel~\cite{reuschel1950vehicle} and Pipes~\cite{pipes1953operational}. The question of traffic stability and SGWs inherent in delayed first-order and second-order models soon emerged and remains under investigation today (see, e.g., reviews~\cite{wilson2011car,lhe2019adaptive,korbmacher2025understanding}). 
In this section, the theoretical foundation of the cooperative buffering solution is established. Using a generalized linear car-following framework, it is demonstrated mathematically that a single-vehicle buffering strategy maximizes the mean vehicle speed while mathematically guaranteeing linear string stability. By doing so, it is proved that cooperative non-uniformity strictly outperforms any non-cooperative, uniform driving strategy.

\subsection{The Optimal Stability Control Problem}

Consider $N\ge 2$ vehicles on a one-dimensional road of length $L>0$ with periodic boundary conditions, i.e., $x_{N+1}=x_1+L$ and $v_{N+1}=v_1$.
The longitudinal dynamics are described by a linear heterogeneous car-following model:
\begin{equation}
\frac{d}{dt}v_n(t) = a_n\, g_n(t) + b_n\, v_n(t) + c_n\, v_{n+1}(t),
\label{eq:linear_model}
\end{equation}
where $v_n(t)=\frac{d}{dt}x_n(t)$ is the speed of vehicle $n$ and $g_n(t)=x_{n+1}(t)-x_n(t)-\ell_n$ is its gap to the leading vehicle.
Here $\ell_n\ge 0$ is the effective length of the vehicle, i.e., its physical length plus a minimum standstill distance. The parameters satisfy $a_n>0$, $b_n<0$, $c_n\ge 0$, and
$b_n+c_n<0$ for all $n\in\{1,\dots,N\}$.

A uniform equilibrium is a stationary state in which all vehicles travel at the same speed $v>0$ and keep constant gaps $g_n$, so that all accelerations vanish:
\begin{equation}
a_n g_n + (b_n+c_n)\,v = 0, \qquad \forall n\in\{1,\dots,N\}.
\label{eq:Equilibrium}
\end{equation}
This equilibrium is string stable in the longest wave length if and only if the following condition holds
\cite{ngoduy2015effect}:
\begin{equation}
\sum_{n=1}^N \frac{b_n^2-c_n^2}{2a_n^2} \ge \sum_{n=1}^N \frac{1}{a_n}.
\label{eq:general_stability}
\end{equation}
Note the condition simplify to the well established condition $b^2-c^2\ge2a$ for uniform parameters. 
We now parametrize the equilibrium through the time gap $T_n:=g_n/v$ of each vehicle.
This corresponds to the \emph{constant time gap} (CTG) policy, a common principle in many car-following models, whereby the equilibrium gap is proportional to the speed. 
The policy is historically motivated by a rule attributed to the \emph{California Motor Vehicle Code} Pipes~\cite{pipes1953operational,Chandler1958}. 
The same principle ensures safe following behavior and remains in use in adaptive cruise control (ACC) systems. 
Indeed, the ISO 15622 standard for ACC systems formalizes it, requiring a minimum selectable time gap
$\tau_{\min}\ge 0.8$~s, with at least one setting offered in the range $\tau\in[1.5,2.2]$~s for higher speeds~\cite{ISO15622:2018}.
Using \eqref{eq:Equilibrium}, the time gap of vehicle $n$ is
\begin{equation}
T_n=\frac{g_n}{v}=-\frac{b_n+c_n}{a_n}
\quad\Longleftrightarrow\quad
b_n=-a_nT_n-c_n, \qquad \forall n\in\{1,\dots,N\}.
\label{eq:CondPara}
\end{equation}
Hence, in our framework, the CTG policy is not an additional assumption but a direct consequence of the linear model \eqref{eq:linear_model}. 
The time gap is entirely determined by the model parameters $(a_n,b_n,c_n)$ and, conversely, can be prescribed by choosing them.
The standstill distance is accounted for by the effective length $\ell_n$ in the definition of $g_n$, so that the net gap vanishes with the speed. 
Allowing $T_n$ to differ across vehicles accounts for heterogeneous driver or ACC settings.
Substituting the expression \eqref{eq:CondPara} into \eqref{eq:general_stability}, and using
$b_n^2-c_n^2=a_n^2T_n^2+2a_nc_nT_n$, the stability condition becomes
\begin{equation}
\sum_{n=1}^N \Big(T_n^2 + 2\,T_n\,\frac{c_n}{a_n}\Big) \ge \sum_{n=1}^N \frac{2}{a_n}=: \alpha_N > 0.
\label{eq:stability_constraint2}
\end{equation}
The threshold $\alpha_N$ is positive since $a_n>0$.

At equilibrium, all the gaps satisfy $g_n=vT_n$.
On the closed road, the total gap is conserved:
$\sum_{n=1}^N g_n = L-\sum_{n=1}^N \ell_n =: \bar L$, where $\bar L>0$ is the total free space.
The equilibrium speed is therefore determined by the time gaps:
\begin{equation}
\sum_{n=1}^N vT_n=\bar L
\quad\iff\quad
v(T_1,\dots,T_N)=\frac{\bar L}{\sum_{n=1}^N T_n}.
\label{eq:mean_speed}
\end{equation}

Since $v$ is inversely proportional to $\sum_n T_n$, maximizing the flow $q=Nv/L$ (for fixed $N$ and $L$) is equivalent to minimizing the total time gap $\sum_n T_n$. 
The stability condition \eqref{eq:stability_constraint2}, however, imposes a competing requirement: the sum
of squared time gaps must exceed a model-dependent threshold. This tension between an $L_1$ objective and an $L_2$ constraint makes the problem non-trivial and gives it its characteristic geometry. 

For a general heterogeneous fleet ($c_n\ge 0$), let $\beta_n:=c_n/a_n\ge 0$ and recall that
$\alpha_N=\sum_{n=1}^N 2/a_n>0$. By \eqref{eq:stability_constraint2}, the optimal stability
control problem reads:

\smallskip
\textbf{Minimize:}
\begin{equation}
J(T_1,\dots,T_N)=\sum_{n=1}^N T_n
\label{eq:objective}
\end{equation}

\textbf{Subject to:}
\begin{align}
&\sum_{n=1}^N \big(T_n^2+2\beta_n T_n\big) \ge \alpha_N \quad \text{(stability constraint)}, \label{eq:constraint_1}\\[1mm]
&T_n \ge T_\text{min}, \quad \forall n\in\{1,\dots,N\}, \label{eq:constraint_2}
\end{align}
where $T_\text{min}>0$ is the minimum admissible time gap, assumed small enough that
\begin{equation}
C:=\alpha_N-\sum_{n=1}^N\big(T_\text{min}^2+2\beta_nT_\text{min}\big)
=\alpha_N-NT_\text{min}^2-2T_\text{min}\sum_{n=1}^N\beta_n>0.
\label{eq:constraint_3}
\end{equation}
Assumption \eqref{eq:constraint_3} states that the configuration $T_n=T_\text{min}$ for all $n$
violates the stability constraint. Without it, this configuration would be admissible and
trivially optimal. For $c_n=0$ we have $\beta_n=0$, the stability constraint reduces to
$\sum_n T_n^2\ge\alpha_N$, and \eqref{eq:constraint_3} reads $\alpha_N>NT_\text{min}^2$.
The feasible set is the region $\{T_n\ge T_\text{min}\ \forall n\}$ lying outside the open ball of center $-(\beta_1,\dots,\beta_N)$ and radius $\sqrt{\alpha_N+\sum_n\beta_n^2}$. It is therefore non-convex, which is the source of the
structure of the solution derived below.

\subsection{The Uniform (Non-Cooperative) Solution}

A natural benchmark, in line with classical homogeneous car-following models, is that all
drivers adopt the same time gap, $T_n=T$ for all $n$. Under this uniform strategy, the stability
constraint \eqref{eq:constraint_1} reads
\begin{equation}
T^2+2\bar\beta\,T-\frac{\alpha_N}{N}\ge 0,
\qquad \bar\beta:=\frac1N\sum_{n=1}^N\beta_n .
\end{equation}
The smallest admissible uniform time gap is the positive root
\begin{equation}
T_\text{non}^*=-\bar\beta+\sqrt{\bar\beta^{\,2}+\frac{\alpha_N}{N}},
\label{eq:T_non}
\end{equation}
which reduces to $T_\text{non}^*=\sqrt{\alpha_N/N}$ when $c_n=0$.
Since $h(T):=T^2+2\bar\beta T-\alpha_N/N$ is increasing for $T>0$ and vanishes at
$T_\text{non}^*$, we have $T_\text{non}^*>T_\text{min}$ if and only if $h(T_\text{min})<0$,
which is exactly \eqref{eq:constraint_3}. Hence the uniform solution is admissible, and by
\eqref{eq:mean_speed} its mean speed is
\begin{equation}
v_\text{non}^*=\frac{\bar L}{N\,T_\text{non}^*}
\qquad\Big(=\frac{\bar L}{\sqrt{N\alpha_N}}\ \text{ if } c_n=0\Big).
\label{eq:v_non}
\end{equation}
Any uniform time gap smaller than $T_\text{non}^*$ violates the stability constraint. Thus
$T_\text{non}^*$ is optimal among uniform strategies. Whether it is optimal among
\emph{all} admissible configurations is the question addressed next.

\subsection{The Cooperative Optimal Solution}

We now relax the uniformity assumption and allow non-uniform time-gap distributions. Write
\begin{equation}
T_n=T_\text{min}+u_n,\qquad u_n\ge 0,\qquad s:=\sum_{n=1}^N u_n .
\end{equation}
The objective becomes $J=NT_\text{min}+s$, so minimizing $J$ is equivalent to minimizing the
total excess gap $s$. Substituting into \eqref{eq:constraint_1} and using the definition
\eqref{eq:constraint_3} of $C$, the stability constraint becomes
\begin{equation}
\sum_{n=1}^N u_n^2+2\sum_{n=1}^N m_nu_n\ \ge\ C,
\qquad m_n:=T_\text{min}+\beta_n .
\label{eq:u_constraint}
\end{equation}
Let $m^*:=\max_n m_n=T_\text{min}+\beta_{\max}$ with $\beta_{\max}:=\max_n\beta_n$. Since $u_n\ge0$,
\begin{equation}
\sum_{n=1}^N m_nu_n\le m^*s,
\qquad
\sum_{n=1}^N u_n^2\le\Big(\sum_{n=1}^N u_n\Big)^2=s^2 .
\label{eq:inequality}
\end{equation}
The second inequality holds because the cross terms $u_nu_m$ are non-negative. Combining
\eqref{eq:u_constraint} and \eqref{eq:inequality}, every admissible configuration satisfies
\begin{equation}
s^2+2m^*s\ \ge\ C .
\end{equation}
The function $s\mapsto s^2+2m^*s$ is increasing on $s\ge0$ and vanishes at $s=0<C$. Hence
\begin{equation}
s\ \ge\ s^*:=-m^*+\sqrt{(m^*)^2+C}\ >0 .
\label{eq:s_star}
\end{equation}

The bound \eqref{eq:s_star} is attained if and only if both inequalities in
\eqref{eq:inequality} are equalities. The second one, $\sum_n u_n^2=(\sum_n u_n)^2$, holds for
non-negative variables if and only if at most one $u_n$ is non-zero
\cite{korbmacher2025emergent}. The first one holds if and only if every $n$ with $u_n>0$
satisfies $m_n=m^*$, i.e., $\beta_n=\beta_{\max}$. This gives the following result.

\begin{proposition}[Optimal cooperative time gaps]\label{prop:coop}
Under \eqref{eq:constraint_3}, the optimal stability control problem has a minimizer. The
minimizers are exactly the configurations
\begin{align}
T_n^* &= T_\text{min}, \qquad n\neq n_0, \label{eq:opt_platoon}\\
T_{n_0}^* &= T_\text{min}+s^*
=-\beta_{\max}+\sqrt{\beta_{\max}^2+\alpha_N-\sum_{n\neq n_0}\big(T_\text{min}^2+2\beta_nT_\text{min}\big)},
\label{eq:opt_buffer}
\end{align}
where the ``buffer vehicle'' $n_0$ belongs to $\arg\max_n\, c_n/a_n$. For $c_n=0$, every
index is admissible as buffer vehicle, and \eqref{eq:opt_buffer} reads
$T_{n_0}^*=\sqrt{\alpha_N-(N-1)T_\text{min}^2}$.
\end{proposition}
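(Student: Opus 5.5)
The plan has two parts: show that the lower bound \eqref{eq:s_star} is attained, and identify exactly the configurations that attain it. In the variables $u_n=T_n-T_\text{min}\ge 0$, minimizing $J$ is the same as minimizing $s=\sum_n u_n$. The discussion preceding the proposition already shows that every admissible configuration satisfies $s\ge s^*$, with $s^*$ as in \eqref{eq:s_star}.

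For existence, pick $n_0\in\arg\max_n\beta_n$ and set $u_{n_0}=s^*$, $u_n=0$ otherwise. Then
\[
\sum_n u_n^2+2\sum_n m_nu_n=(s^*)^2+2m^*s^*=C,
\]
because $s^*$ is the positive root of $s^2+2m^*s-C$. By \eqref{eq:u_constraint} this configuration is admissible, and its objective value is $NT_\text{min}+s^*$, which equals the lower bound. Hence a minimizer exists and the optimal value is $NT_\text{min}+s^*$. No compactness argument is needed.

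For the characterization, let $u$ be any minimizer, so $s=s^*$. The chain
\[
C\le\sum_n u_n^2+2\sum_n m_nu_n\le s^2+2m^*s=C
\]
forces equality throughout. Since each of the two inequalities in \eqref{eq:inequality} goes in the same direction, each must individually be an equality. The second equality, $\sum u_n^2=(\sum u_n)^2$, means every cross term $u_nu_m$ with $n\ne m$ vanishes. Hence at most one $u_n$ is nonzero, and since $s^*>0$, exactly one is, say $u_{n_0}=s^*$. The first equality then reads $m_{n_0}s^*=m^*s^*$, which gives $\beta_{n_0}=\beta_{\max}$, i.e. $n_0\in\arg\max c_n/a_n$. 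Conversely, every such configuration is a minimizer by the existence step.

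It remains to put $T_{n_0}^*=T_\text{min}+s^*$ in the stated closed form. Write $T_{n_0}^*=T_\text{min}-m^*+\sqrt{(m^*)^2+C}=-\beta_{\max}+\sqrt{(T_\text{min}+\beta_{\max})^2+C}$, and expand
\[
(T_\text{min}+\beta_{\max})^2+C=\beta_{\max}^2+T_\text{min}^2+2\beta_{n_0}T_\text{min}+\alpha_N-\sum_n\big(T_\text{min}^2+2\beta_nT_\text{min}\big),
\]
using $\beta_{\max}=\beta_{n_0}$. The $n_0$ term cancels, leaving $\beta_{\max}^2+\alpha_N-\sum_{n\ne n_0}(\cdots)$, which is \eqref{eq:opt_buffer}. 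When all $c_n=0$, every $\beta_n=0$, so every index lies in the argmax and the formula reduces to $\sqrt{\alpha_N-(N-1)T_\text{min}^2}$.

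The only step needing care is the equality characterization: it requires $s^*>0$, which follows from $C>0$, to exclude the case where all $u_n$ vanish. Everything else is routine algebra.
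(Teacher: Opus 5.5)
Your proposal is correct and follows the paper's proof essentially step for step: the lower bound $s\ge s^*$, attainment by the single-buffer configuration that saturates \eqref{eq:u_constraint}, equality in both inequalities of \eqref{eq:inequality} forcing exactly one nonzero $u_{n_0}$ with $\beta_{n_0}=\beta_{\max}$, and the same identity $(T_\text{min}+\beta_{\max})^2+C=\beta_{\max}^2+\alpha_N-\sum_{n\ne n_0}(T_\text{min}^2+2\beta_nT_\text{min})$ for the closed form. The only differences are minor: you prove the cross-term fact behind $\sum_n u_n^2=(\sum_n u_n)^2$ directly, where the paper cites a reference, and you state explicitly that $s^*>0$ (from $C>0$) excludes the all-zero case, which the paper leaves implicit.
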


\begin{proof}
By the chain of inequalities above, every admissible configuration satisfies $s\ge s^*$. The
configuration \eqref{eq:opt_platoon}--\eqref{eq:opt_buffer} is admissible (it saturates
\eqref{eq:u_constraint}) and has $s=s^*$, so it is a global minimizer. Conversely, if $s=s^*$,
all inequalities in the chain are equalities, which forces the structure described above.
The expression in \eqref{eq:opt_buffer} follows from
$(T_\text{min}+\beta_{\max})^2+C=\beta_{\max}^2+\alpha_N-\sum_{n\ne n_0}(T_\text{min}^2+2\beta_nT_\text{min})$.
\end{proof}

The maximum achievable stable throughput thus requires an extreme behavioral non-uniformity:
$N-1$ vehicles form a dense, high-capacity platoon operating at the minimum time gap
$T_\text{min}$, while a single cooperative vehicle adopts a large time gap and acts as a
buffer that stabilizes the whole fleet. When the vehicles are heterogeneous, the buffer role
is assigned to the vehicle with the largest ratio $c_n/a_n$, i.e., the one for which an
increase of the time gap yields the largest gain in stability. The corresponding mean speed is
\begin{equation}
v_\text{coop}^*=\frac{\bar L}{(N-1)T_\text{min}+T_{n_0}^*}.
\end{equation}

\begin{corollary}[Cooperation strictly outperforms uniformity]\label{cor:comparison}
Under \eqref{eq:constraint_3}, $v_\text{coop}^*>v_\text{non}^*$.
\end{corollary}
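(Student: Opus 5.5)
Since $v=\bar L/\sum_n T_n$ by \eqref{eq:mean_speed}, the claim $v_\text{coop}^*>v_\text{non}^*$ is equivalent to the strict inequality $J^*_\text{coop}:=(N-1)T_\text{min}+T_{n_0}^*<NT_\text{non}^*=:J_\text{non}$. The plan is to combine Proposition~\ref{prop:coop} with the characterization of its minimizers. First, the uniform configuration $T_n=T_\text{non}^*$ for all $n$ is admissible: it satisfies the stability constraint with equality by \eqref{eq:T_non}, and $T_\text{non}^*>T_\text{min}$ follows from \eqref{eq:constraint_3}, as shown in the uniform subsection. Since the cooperative configuration is a global minimizer, this immediately gives the weak inequality $J^*_\text{coop}\le J_\text{non}$.

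To obtain strictness I would invoke the ``exactly'' part of Proposition~\ref{prop:coop}. Every minimizer has $N-1$ entries equal to $T_\text{min}$. The uniform configuration has all entries equal to $T_\text{non}^*>T_\text{min}$, and $N\ge 2$, so it has no entry equal to $T_\text{min}$. Hence it is not a minimizer. Its objective value is therefore strictly larger than the minimum, which gives $J^*_\text{coop}<J_\text{non}$. Equivalently, in the $u$-variables the uniform choice has $N\ge2$ positive $u_n$, so the inequality $\sum u_n^2\le s^2$ in \eqref{eq:inequality} is strict, which forces $s>s^*$.

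Inverting the totals with $\bar L>0$ then yields $v_\text{coop}^*>v_\text{non}^*$. The only delicate point is making sure strictness is not lost. This rests on two facts: $T_\text{non}^*>T_\text{min}$, which needs \eqref{eq:constraint_3} and not just $\ge$, and $N\ge 2$, which ensures at least two excesses $u_n$ are positive. For $c_n=0$ one can also verify the result directly as a sanity check: the inequality $(N-1)T_\text{min}+\sqrt{\alpha_N-(N-1)T_\text{min}^2}<\sqrt{N\alpha_N}$ follows from Cauchy--Schwarz, $\sum T_n\le\sqrt{N\sum T_n^2}$, with equality only for equal entries, which is excluded here.
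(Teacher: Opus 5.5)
Your proposal is correct and follows essentially the paper's argument: admissibility of the uniform configuration gives the weak inequality. Strictness comes from the characterization of minimizers in Proposition~\ref{prop:coop}, since the uniform point has all $N\ge 2$ components strictly above $T_\text{min}$, and \eqref{eq:mean_speed} then converts totals to speeds; your Cauchy--Schwarz check for $c_n=0$ is a valid extra that the paper does not include.
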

\begin{proof}
The uniform solution $T_n\equiv T_\text{non}^*$ is admissible (Section on the uniform
solution), and therefore $\sum_n T_n^*\le N\,T_\text{non}^*$ by optimality of the cooperative
solution. Since $N\ge2$ and $T_\text{non}^*>T_\text{min}$, the uniform configuration has
at least two components strictly above $T_\text{min}$, so by Proposition~\ref{prop:coop} it is
not a minimizer. The inequality is therefore strict, and the result follows from
\eqref{eq:mean_speed}.
\end{proof}



\begin{remark}[Mean--variance reformulation]
\label{rem:mean_variance}
Let $\bar T,\bar\beta$ denote the means of $T_n,\beta_n:=c_n/a_n$ over the $N$ vehicles,
$\sigma_T^2,\sigma_\beta^2$ their empirical variances, $\mathrm{Cov}(T,\beta)$ their
covariance, and $a_H:=N/\sum_n a_n^{-1}$ the harmonic mean of the $a_n$. Using
$\frac1N\sum_nT_n^2=\bar T^2+\sigma_T^2$ and $\frac1N\sum_n\beta_nT_n=\bar\beta\bar
T+\mathrm{Cov}(T,\beta)$, the stability condition \eqref{eq:stability_constraint2} is
equivalent to
\begin{equation}
\bar T^2+\sigma_T^2+2\bar\beta\,\bar T+2\,\mathrm{Cov}(T,\beta)\ \ge\ \frac{2}{a_H}
=\frac{\alpha_N}{N},  \qquad \beta_n:=\frac{c_n}{a_n}.
\label{eq:mean_var_general}
\end{equation}
For $c_n=0$ (i.e., $\beta_n=0$) this reduces to $\bar T^2+\sigma_T^2\ge\alpha_N/N$.
\end{remark}

\begin{remark}[What stabilizes the fleet]
By \eqref{eq:mean_speed}, throughput depends only on $\bar T$, so any stabilizing effect that
does not raise $\bar T$ is obtained for free. 
Equation~\ref{eq:mean_var_general} shows that the three parameters $(T_n,a_n,c_n)$ play markedly different roles in this respect.
\begin{itemize}
    \item \emph{Time gaps $T_n$.~} The variance $\sigma_T^2$ enters \eqref{eq:mean_var_general} directly
and always helps improving the stability.  
Dispersion of the $T_n$ is \emph{unconditionally} stabilizing, at no cost to the mean. 
This is the mechanism behind Proposition~\ref{prop:coop}: the optimal stability control problem is exactly a maximization of $\sigma_T^2$ at fixed $\bar T$, subject to $T_n\ge T_\text{min}$.

\item \emph{Coupling coefficients $c_n$.~} The dependence on $c_n/a_n$ in \eqref{eq:mean_var_general} is exactly linear, so, unlike $T_n$, the variance $\sigma_{c/a}^2$ plays no role whatsoever. 
Only the mean $\overline{c/a}$ (which lowers the required $\bar T$, even for a uniform fleet) and the covariance $\mathrm{Cov}(T,c/a)$ matter. 
Since $|\mathrm{Cov}(T,c/a)|\le\sigma_T\sigma_{c/a}$, heterogeneity in $c_n$ is stabilizing only insofar as large time gaps are assigned to vehicles with large $c_n/a_n$.
This is consistent with the choice of buffer vehicle $n_0\in\arg\max_n c_n/a_n$ in Proposition~\ref{prop:coop}.

\item \emph{Reactivity coefficients $a_n$.~} The parameter $a_n$ enters \eqref{eq:mean_var_general} twice: through the harmonic mean $a_H$ on the right-hand side, and through $\beta_n=c_n/a_n$ on the left-hand side. 
Through $a_H$ alone (i.e., $c_n\equiv0$), heterogeneity of the $a_n$ is destabilizing at fixed mean $\bar a$, since $a_H\le\bar a$. 
This is opposite to the effect of $T_n$. 
Once $c_n\neq0$, this conclusion no longer holds in general: it is reversed for vehicles with $c_nT_n>1$. 
More broadly the dominant effect is the correlation between $a_n$ and $c_nT_n$ across the fleet (small $a_n$ assigned to vehicles with large $c_nT_n$ is stabilizing) rather than the variance of $a_n$ in isolation.
\end{itemize}

In summary, only the heterogeneity of the time gaps $T_n$ stabilizes systematically. 
The effect of heterogeneity in $a_n$ and $c_n$ is more intricate and is governed chiefly by their correlation with $T_n$, rather than by their own variance (see Table~\ref{tab:stabilizing_mechanisms}).

\begin{table}[htbp]
\centering
\caption{Role of the mean and the heterogeneity of each parameter in the stability condition
\eqref{eq:stability_constraint2}, at fixed values of the other two parameters.\medskip}
\label{tab:stabilizing_mechanisms}
\begin{tabular}{lp{4.2cm}p{7.7cm}}
\toprule
Parameter & Effect of the mean & Effect of heterogeneity (fixed mean) \\
\midrule
$T_n$ & Raising $\bar T$ stabilizes, but reduces throughput &
\emph{Always} stabilizing, at no cost to throughput; enters exactly through $\sigma_T^2$ \\[2mm]
$c_n$ & Raising $\bar c$ stabilizes, at no cost to throughput &
No effect in isolation ($\sigma_c^2$ does not appear); stabilizing only through
$\mathrm{Cov}(T_n,c_n/a_n)>0$, i.e., large time gaps assigned to large $c_n/a_n$ \\[2mm]
$a_n$ & Raising $\bar a$ stabilizes if $c_nT_n<1$, at no cost to throughput &
Destabilizing if $a_n$ and $c_nT_n$ are positively correlated (in particular 
if $c_nT_n<1$ uniformly); reversed if $c_nT_n>1$ \\[0mm]
\bottomrule
\end{tabular}
\end{table}
\end{remark}

\section{Microscopic Models}
While the analytical proof in Section~\ref{sec:optimization} guarantees the global optimality of a non-uniform time-gap distribution under linearized assumptions, real-world traffic is governed by complex, nonlinear dynamics. 
In addition, the string stability condition \eqref{eq:general_stability} only holds for the longest wavelength. However, shorter wavelength modes can become unstable for non-uniform models \cite{treiber2012traffic}.
To demonstrate that the cooperative buffering strategy effectively dissipates finite-amplitude perturbations and expands the basin of attraction for laminar flow, the system is evaluated using two fundamentally distinct second-order car-following models. By demonstrating the strategy's efficacy across dynamically divergent models, we verify that the theoretical benefits of non-uniformity are robust, fundamental to traffic physics, and strictly model-agnostic.

\subsection{Intelligent Driver Model}
The IDM, proposed by Treiber et al.~\cite{treiber2000congested}, is a continuous-time model that smoothly combines a free-road acceleration strategy with a collision-avoidance braking strategy. The continuous acceleration of the $n$-th vehicle is given by:
\begin{equation}
    \frac{dv_n}{dt} = a \left[ 1 - \left(\frac{v_n}{v_0}\right)^4 - \left(\frac{s_n^*(v_n, v_{n+1})}{g_n}\right)^2 \right]
\end{equation}
where the desired dynamic gap $s^*$ is calculated as:
\begin{equation}
     s_n^*(v_n, v_{n+1}) = s_0 + v_n T_n + v_n\frac{v_n-v_{n+1}}{2\sqrt{ab}}
     \label{sstar}
\end{equation}
Here, $v_0>0$ is the desired free-flow velocity, 
$s_0\ge0$ is the minimum stopping distance, $T_n>0$ is the desired time gap specific to each vehicle, $a>0$ is the maximum acceleration, and $b>0$ is the comfortable deceleration parameter governing passenger ride comfort. The IDM is highly realistic but notoriously prone to string instability for small values of $a$ and high values of $b$.

\subsection{Full Velocity Difference Model (FVDM)}
The FVDM, developed by Jiang et al.\ \cite{jiang2001full}, extends the classical Optimal Velocity Model~\cite{bando1995dynamical} by explicitly incorporating 
the relative velocity with the predecessor. The dynamic equation is:
\begin{equation}
    \frac{dv_n}{dt} = \frac{1}{\tau_1} \big(V_n(g_n) - v_n\big) + \frac{1}{\tau_2} \big(v_{n+1}-v_n\big)
\end{equation}
where $V_n(g_n)=\min\big\{v_0,\max\{0,g_n/T_n\}\big\}$ is the piece-wise linear optimal velocity function depending on the inter-vehicle gap $g_n$, $\tau_1>0$ represents the driver's relaxation time (sensitivity to the optimal speed), and $\tau_2>0$ represents the sensitivity to the relative velocity. 
Note that the OV function is vehicle-specific, since the time gap parameter ($T_n$) differs between vehicles.
The FVDM allows us to validate the optimal cooperative buffering strategy on a model whose instability dynamics are governed primarily by relaxation mechanisms rather than direct kinematic constraints as in IDM.

\subsection{Implementation of the Cooperative Buffering Strategy}

  \begin{figure}[tbp]
    \centering

    \subfigure[]{
        \includegraphics[width=0.48\textwidth]{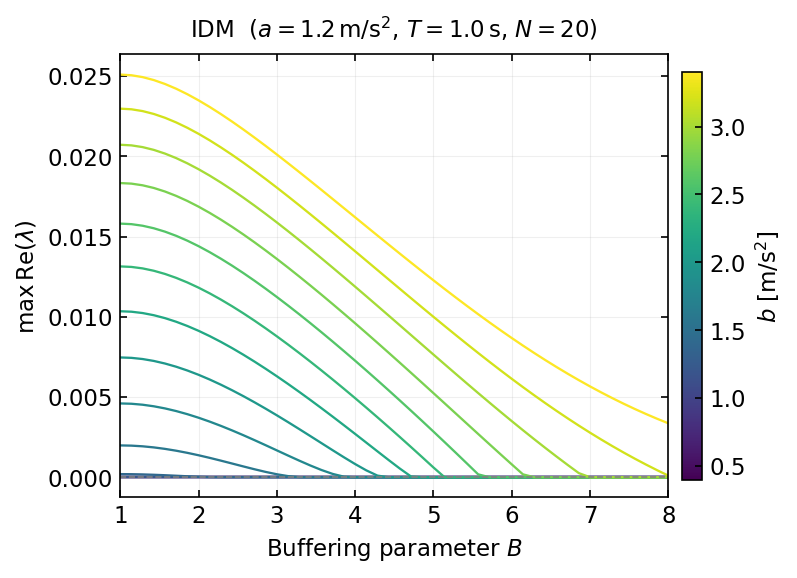}
        \label{RE_a} 
    }
    \hfill 
    \subfigure[]{
        \includegraphics[width=0.48\textwidth]{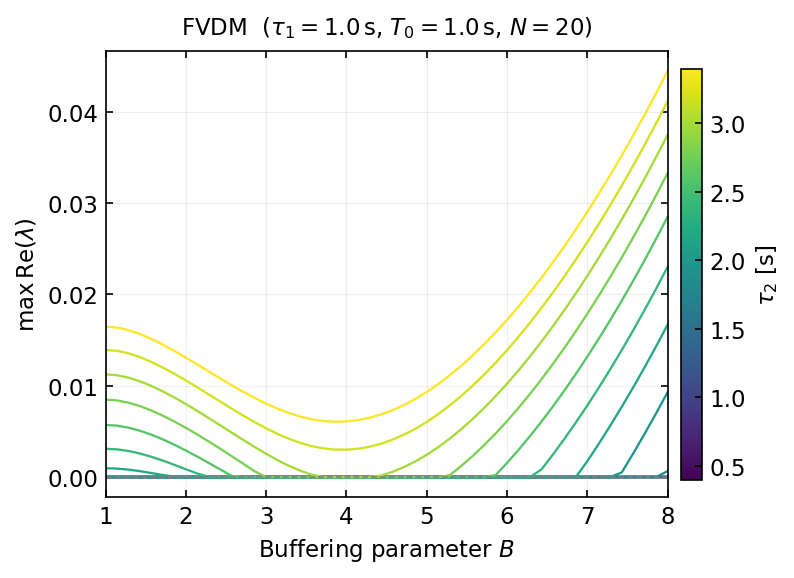}
        \label{RE_b} 
    }

    \caption{Maximum real part of the dominant eigenvalue, $\max Re(\lambda)$, as a function of the buffering parameter B for a platoon of $N = 20$ vehicles. Each curve corresponds to a different value of the respective sensitivity parameter (color-coded). The dashed line at max $Re(\lambda) = 0$ marks the string-stability boundary. (a) IDM: increasing $b$ raises the instability growth rate, while increasing B monotonically reduces it, eventually restoring string stability. (b) FVDM: a non-monotonic dependence on $B$ is observed; moderate buffering stabilizes the platoon, but excessive buffering
  re-introduces instability, particularly for large values of $\tau_2$.}
    \label{fig:spectral} 
\end{figure}

A periodic ring-road topology of length $L = 201$ meters populated by $N = 20$ vehicles similar to the setup of Sugiyama et al. 2008~\cite{sugiyama2008traffic} is considered. This spatial configuration establishes a capacity-constrained, dense traffic regime inherently susceptible to the spontaneous emergence of SGWs. 
To operationalize the theoretically optimal cooperative buffering strategy within these nonlinear frameworks, a dimensionless buffering scaling parameter $B \ge 1$ is introduced. For the single designated buffer vehicle, the nominal time gap $T$ is amplified such that $T_{\text{buffer}} = B \cdot T$. Crucially, to rigorously isolate the stabilizing effect of spatial non-uniformity from trivial density reductions, we enforce a strict mean-preserving transformation. The remaining $N-1$ platooning vehicles cooperatively compress their respective time gaps to
\begin{equation}
T_{\text{platoon}} = \frac{N - B}{N - 1} T.
\label{platoon}
\end{equation}
Consequently, the system-wide average time gap is strictly maintained at $T$. The traditional uniform traffic paradigm corresponds to the homogeneous baseline state of $B = 1$.

\section{Numerical Spectral Analysis}

The efficacy of the cooperative buffering strategy is evaluated across two critical dimensions. First, the suppression of string instabilities via numerical spectral analysis, and second, the maximization of string stable traffic flow.

\subsection{Spectral Analysis}
\label{sec:spectral}

The string stability of the cooperatively buffered traffic flow is first assessed by computing the maximum real part of the eigenvalues, $\max\operatorname{Re}(\lambda)$, derived from the Jacobian matrix of the linearized non-uniform system evaluated at the uniform equilibrium state.
In contrast to condition~(\ref{eq:general_stability}), which governs the longest-wavelength mode, the numerical spectral analysis evaluates stability across the full spectrum of wavenumbers $\theta\in[0,\pi]$, thereby capturing potential short-wavelength instabilities that arise exclusively in heterogeneous configurations~\cite{treiber2012traffic}. The traffic flow is linearly string-stable if and only if
$\max\operatorname{Re}(\lambda)\leq 0$ across all modes.

Figure~\ref{RE_a} presents the stability spectrum for the IDM across varying
comfortable deceleration parameters~$b$, setting the desired velocity to
$v_0=20$\,m/s, the nominal time gap to $T_0=1.0$\,s, and maximum acceleration to $a=1.2$\,m/s$^2$. Under uniform, non-cooperative conditions ($B=1$), the system is acutely unstable ($\max\operatorname{Re}(\lambda)>0$)
across a broad spectrum of comfort preferences. However, as the cooperative
buffering multiplier~$B$ increases, the dominant eigenvalue's real part monotonically decreases. For any given deceleration threshold, allocating a sufficiently large buffer to a single vehicle successfully drives the system into the stable left-half plane. 

Figure~\ref{RE_b} illustrates the corresponding eigenvalue spectrum for the FVDM
across varying relative velocity sensitivities~$\tau_2$. Unlike the
monotonic stabilization observed in the IDM, the FVDM exhibits a profound,
non-monotonic stability landscape. Moderate cooperative buffering
($3\leq B\leq 5$) successfully dampens the dominant long-wavelength
instabilities, minimizing $\max\operatorname{Re}(\lambda)$. However, excessive buffering ($B>5$) triggers a resurgence of instability. An exceptionally large buffer mandates that the remaining $N-1$ platooning vehicles operate at severely compressed headways, which violates the microscopic relaxation limits of the FVDM and excites high-frequency, short-wavelength modes. Nevertheless, the optimally tuned cooperative system ($B\approx 4$) vastly outperforms the uniform baseline, confirming that the stabilising mechanism of cooperative buffering is robust even in models whose instability dynamics are relaxation-driven.

\subsection{Effect on Traffic Flow}

Having established that cooperative buffering expands the string stable parameter space, this section quantifies the direct impact on traffic efficiency. For each model and parameter configuration, the
\emph{maximum stable flow}~$Q$ (veh/h) is determined. This describes the highest throughput at which the system remains linearly string-stable across all wavenumbers. This quantity represents the operational capacity frontier. Any further increase in flow would necessarily violate the stability constraint and permit SGW formation.

The computation of~$Q$ reflects the distinct equilibrium structures of each model. In the IDM, the nominal time gap~$T_0$ is held constant and different operating points on the fundamental diagram are explored by
varying the equilibrium free spacing~$s_e$. For each value of~$B$, the highest equilibrium velocity is identified, and hence the highest flow $Q=
v_e(s_e)/(s_e+\ell)$, where~$s_e$ denotes the IDM equilibrium
spacing equation~(\ref{sstar})\, for which all eigenvalues of the linearized non-uniform system have non-positive real parts, while $\ell=l+s_0$. In the FVDM, vehicles are assumed to operate at the free-flow speed~$v_0$, and the capacity frontier is obtained by minimizing the nominal time gap~$T_0$ subject to the constraint that the heterogeneous system remains string-stable, yielding
$Q=v_0/(\ell+v_0\,T^*)$. 
In both cases, the precise stability boundaries are located via bisection.

\begin{figure}[tbp]
    \centering

    \subfigure[]{
        \includegraphics[width=0.48\textwidth]{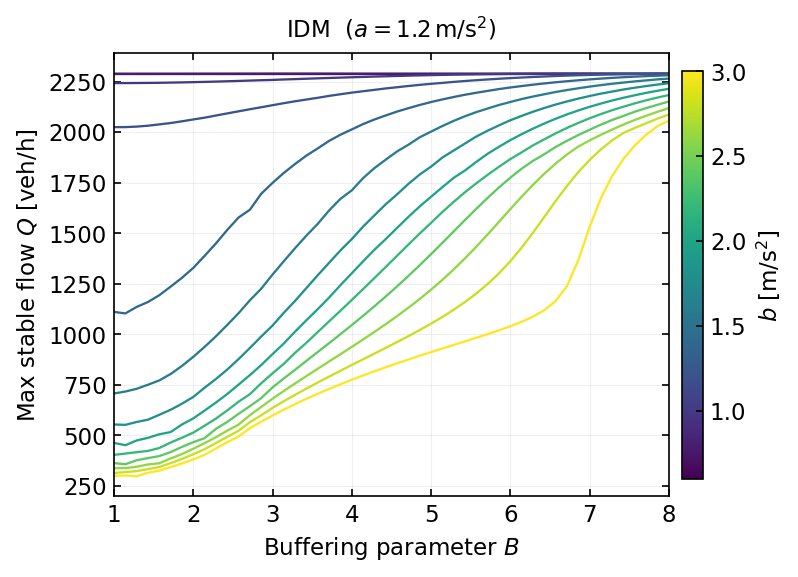}
        \label{fig:ts_a} 
    }
    \hfill 
    \subfigure[]{
        \includegraphics[width=0.48\textwidth]{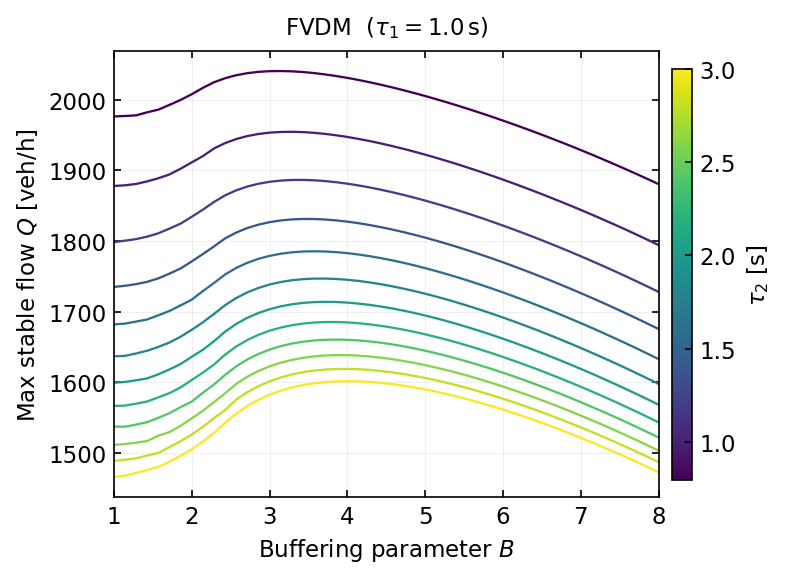}
        \label{fig:ts_b} 
    }

    \caption{ Maximum string-stable traffic flow $Q$ [veh/h] as a function of the cooperative buffering parameter $B$. The curves indicate the highest theoretical throughput achievable while preserving linear string stability ($\max \text{Re}(\lambda) \le 0$). (a)~IDM: Evaluated across varying comfortable deceleration parameters $b$. For highly unstable configurations (high $b$), the uniform baseline ($B=1$) suffers severe capacity drops. Increasing $B$ effectively dissipates perturbations and restores near-theoretical maximum throughput. (b)~FVDM: Evaluated across varying relative velocity sensitivities $\tau_2$. The macroscopic throughput exhibits a non-monotonic dependence on $B$, achieving an absolute global maximum at an optimally tuned moderate buffer ($B \approx 3$ to $4$). In both models, optimal cooperative buffering strictly outperforms the uniform baseline.}
    \label{fig:stable_flow} 
\end{figure}

Figure~\ref{fig:ts_a} presents the maximum stable flow for the IDM as a function of the buffering parameter~$B$, with curves color-coded by the comfortable deceleration parameter~$b$. At the uniform baseline ($B=1$), the achievable stable throughput is severely constrained. This applies especially for large
values of~$b$, where the IDM's heightened susceptibility to string instability forces the system to operate at conservative, low-density equilibria. As~$B$ increases, the maximum stable flow rises monotonically
and substantially. For high comfort parameters ($b\geq 2.5$\,m/s$^2$), the cooperative strategy unlocks high throughput gains, demonstrating
that the single buffer vehicle absorbs the system's entire stability burden and frees the remaining $N-1$ vehicles to operate at higher-density, higher-flow equilibria that would be categorically unstable under uniform
conditions. Crucially, even for moderate deceleration preferences, the cooperative strategy consistently outperforms the uniform baseline at every
value of $B>1$.

Figure~\ref{fig:ts_b} displays the analogous analysis for FVDM, with curves parametrized by the relative velocity sensitivity~$\tau_2$. Consistent with
the non-monotonic spectral behavior identified in Section~\ref{sec:spectral}, the
maximum stable flow initially increases with~$B$ but saturates and, for
large~$\tau_2$, exhibits a slight decline at excessive buffering levels.
This saturation reflects the FVDM's relaxation dynamics. As the buffer gap grows, the compressed platoon headways approach the model's minimum
relaxation threshold, exciting short-wavelength instabilities that cap further capacity gains. The throughput improvement is most pronounced for large~$\tau_2$, where the uniform baseline is most severely flow-constrained by instability. For moderate buffering ($B\approx 3$--$5$), the cooperative
strategy yields a significant increase in throughput.

\section{Conclusion and Future Work}

This paper formalizes traffic string stability as a constrained optimization problem and proves analytically that macroscopic throughput is globally maximized by a non-uniform cooperative strategy.

This analytical result is verified via numerical spectral analysis and maximum-stable-flow computations using two fundamentally distinct car-following models. For the IDM, cooperative buffering monotonically expands the string-stable parameter space with increasing~$B$, resolving the classical trade-off between comfortable deceleration and string stability and unlocking throughput gains exceeding~40\% for realistic parameter settings. For the FVDM, whose dynamics are relaxation-driven, moderate buffering ($B\approx 3$--$5$) yields the largest capacity improvements, while excessive buffering triggers short-wavelength instabilities. By preserving the mean time gap through cooperative compression~(see eq.~(\ref{platoon})), the strategy also inherently outperforms non-cooperative JAD, which inflates average headways at the cost of reduced throughput.

While Equation~\eqref{eq:opt_buffer} proves that a single buffer is the theoretical global optimum, practical highway deployments introduce constraints not captured by the periodic ring-road topology. Excessively large buffer headways invite cut-in maneuvers from adjacent lanes, and long continuous platoons may excite short-wavelength instabilities, as observed in the FVDM. Future work will therefore extend this framework to a constrained multi-buffer optimization problem, deriving the optimal spatial distribution of multiple buffer vehicles under bounds on maximum platoon length and buffer gap size, bridging the analytical optimum established here with the operational realities of multi-lane highway traffic.
\begin{acknowledgements}
  The authors gratefully acknowledge Claudia Totzeck for her valuable contributions in deriving the optimality conditions, as well as support from the Deutsche Forschungsgemeinschaft (DFG, Germany) under grant number 546728715.

\end{acknowledgements}


\section*{Data and code availability}
An interactive demonstration of cooperative buffering is cited in Ref.~\cite{SimBufferingStrategy}.


\bibliographystyle{abbrv} 
\bibliography{biblio} 

@article{ciuffo2021requiem,
  doi = {10.1016/j.trc.2021.103305},
  title={Requiem on the positive effects of commercial adaptive cruise control on motorway traffic and recommendations for future automated driving systems},
  author={Ciuffo, Biagio and Mattas, Konstantinos and Makridis, Michail and Albano, Giovanni and Anesiadou, Aikaterini and He, Yinglong and Josvai, Szil{\'a}rd and Komnos, Dimitris and Pataki, Marton and Vass, Sandor and others},
  journal={Transportation Research Part C: Emerging Technologies},
  volume={130},
  pages={103305},
  year={2021},
  publisher={Elsevier}
}

@article{bando1995dynamical,
  doi = {10.1103/physreve.51.1035},
  title={Dynamical model of traffic congestion and numerical simulation},
  author={Bando, Masako and Hasebe, Katsuya and Nakayama, Akihiro and Shibata, Akihiro and Sugiyama, Yuki},
  journal={Physical review E},
  volume={51},
  number={2},
  pages={1035--1042},
  year={1995},
  publisher={APS}
}

@article{jiang2001full,
  doi = {10.1103/physreve.64.017101},
  title={Full velocity difference model for a car-following theory},
  author={Jiang, Rui and Wu, Qingsong and Zhu, Zuojin},
  journal={Physical Review E},
  volume={64},
  number={1},
  pages={017101},
  year={2001},
  publisher={APS}
}

@article{ngoduy2015effect,
  doi = {10.1080/21680566.2014.960503},
  title={Effect of the car-following combinations on the instability of heterogeneous traffic flow},
  author={Ngoduy, Dong},
  journal={Transportmetrica B: transport dynamics},
  volume={3},
  number={1},
  pages={44--58},
  year={2015},
  publisher={Taylor \& Francis}
}

@article{li2014stop,
  doi = {10.1016/j.trb.2014.09.014},
  title={Stop-and-go traffic analysis: Theoretical properties, environmental impacts and oscillation mitigation},
  author={Li, Xiaopeng and Cui, Jianxun and An, Shi and Parsafard, Mohsen},
  journal={Transportation Research Part B: Methodological},
  volume={70},
  pages={319--339},
  year={2014},
  publisher={Elsevier}
}

@inproceedings{korbmacher2025understanding,
  doi = {10.1109/cdc57313.2025.11312050},
  title={Understanding collective stability of {ACC} systems: {}From theory to real-world observations},
  author={Korbmacher, Raphael and Khound, Parthib and Tordeux, Antoine},
  booktitle={2025 IEEE 64th Conference on Decision and Control (CDC)},
  pages={8371--8378},
  year={2025},
  organization={IEEE}
}

@article{sugiyama2008traffic,
  doi = {10.1088/1367-2630/10/3/033001},
  title={Traffic jams without bottlenecks—experimental evidence for the physical mechanism of the formation of a jam},
  author={Sugiyama, Yuki and Fukui, Minoru and Kikuchi, Macoto and Hasebe, Katsuya and Nakayama, Akihiro and Nishinari, Katsuhiro and Tadaki, Shin-ichi and Yukawa, Satoshi},
  journal={New Journal of Physics},
  volume={10},
  number={3},
  pages={033001},
  year={2008},
  publisher={IOP Publishing}
}

@article{tadaki2013phase,
  doi = {10.1088/1367-2630/15/10/103034},
  title={Phase transition in traffic jam experiment on a circuit},
  author={Tadaki, Shin-ichi and Kikuchi, Macoto and Fukui, Minoru and Nakayama, Akihiro and Nishinari, Katsuhiro and Shibata, Akihiro and Sugiyama, Yuki and Yosida, Taturu and Yukawa, Satoshi},
  journal={New Journal of Physics},
  volume={15},
  number={10},
  pages={103034},
  year={2013},
  publisher={IOP publishing}
}

@article{jiang2014traffic,
  doi = {10.1371/journal.pone.0094351},
  title={Traffic experiment reveals the nature of car-following},
  author={Jiang, Rui and Hu, Mao-Bin and Zhang, HM and Gao, Zi-You and Jia, Bin and Wu, Qing-Song and Wang, Bing and Yang, Ming},
  journal={PloS one},
  volume={9},
  number={4},
  pages={e94351},
  year={2014},
  publisher={Public Library of Science San Francisco, USA}
}

@article{stern2018dissipation,
  doi = {10.1016/j.trc.2018.02.005},
  title={Dissipation of stop-and-go waves via control of autonomous vehicles: Field experiments},
  author={Stern, Raphael E and Cui, Shumo and Delle Monache, Maria Laura and Bhadani, Rahul and Bunting, Matt and Churchill, Miles and Hamilton, Nathaniel and Pohlmann, Hannah and Wu, Fangyu and Piccoli, Benedetto and others},
  journal={Transportation Research Part C: Emerging Technologies},
  volume={89},
  pages={205--221},
  year={2018},
  publisher={Elsevier}
}

@article{nishi2013theory,
  doi = {10.1016/j.trb.2013.02.003},
  title={Theory of jam-absorption driving},
  author={Nishi, Ryosuke and Tomoeda, Akiyasu and Shimura, Kenichiro and Nishinari, Katsuhiro},
  journal={Transportation Research Part B: Methodological},
  volume={50},
  pages={116--129},
  year={2013},
  publisher={Elsevier}
}

@article{hegyi2005optimal,
  doi = {10.1109/TITS.2004.842408},
  title={Optimal coordination of variable speed limits to suppress shock waves},
  author={Hegyi, Andreas and De Schutter, Bart and Hellendoorn, Johannes},
  journal={IEEE Transactions on Intelligent Transportation Systems},
  volume={6},
  number={1},
  pages={102--112},
  year={2005},
  publisher={IEEE}
}

@inproceedings{hegyi2010dynamic,
  doi = {10.1109/itsc.2010.5624974},
  title={Dynamic speed limit control to resolve shock waves on freeways-Field test results of the SPECIALIST algorithm},
  author={Hegyi, Andreas and Hoogendoorn, Serge P},
  booktitle={13th International IEEE Conference on Intelligent Transportation Systems},
  pages={519--524},
  year={2010},
  organization={IEEE}
}

@article{reuschel1950vehicle,
  title={Vehicle movements in a platoon},
  author={Reuschel, A},
  journal={Oesterreichisches Ingenieur-Archir},
  volume={4},
  pages={193--215},
  year={1950}
}

@article{pipes1953operational,
  doi = {10.1063/1.1721265},
  title={An operational analysis of traffic dynamics},
  author={Pipes, Louis A},
  journal={Journal of Applied Physics},
  volume={24},
  number={3},
  pages={274--281},
  year={1953},
  publisher={American Institute of Physics}
}

@ARTICLE{he2025review,
  doi = {10.1109/tits.2026.3658644},
  author={He, Zhengbing and Laval, Jorge A. and Han, Yu and Hegyi, Andreas and Nishi, Ryosuke and Wu, Cathy},
  journal={IEEE Transactions on Intelligent Transportation Systems}, 
  title={A Review of Stop-and-Go Traffic Wave Suppression Strategies: Variable Speed Limit Versus Jam-Absorption Driving}, 
  year={2026},
  volume = {27},
  number = {5},
  pages = {4986--5000}}

@article{wang2021jam,
  doi = {10.1109/mits.2021.3050889},
  title={Jam-absorption driving strategy for improving safety near oscillations in a connected vehicle environment considering consequential jams},
  author={Wang, Shunchao and Li, Zhibin and Cao, Zehong and Jolfaei, Alireza and Cao, Qi},
  journal={IEEE Intelligent Transportation Systems Magazine},
  volume={14},
  number={2},
  pages={41--52},
  year = {2022},
  publisher={IEEE}
}

@article{gunter2020commercially,
  doi = {10.1109/tits.2020.3000682},
  title={Are commercially implemented adaptive cruise control systems string stable?},
  author={Gunter, George and Gloudemans, Derek and Stern, Raphael E and McQuade, Sean and Bhadani, Rahul and Bunting, Matt and Delle Monache, Maria Laura and Lysecky, Roman and Seibold, Benjamin and Sprinkle, Jonathan and others},
  journal={IEEE Transactions on Intelligent Transportation Systems},
  volume={22},
  number={11},
  pages={6992--7003},
  year = {2021},
  publisher={IEEE}
}

@article{treiber2001microsimulations,
title = {Microsimulations of Freeway Traffic Including Control Measures},
author = {M. Treiber and D. Helbing},
pages = {478},
volume = {49},
number = {11},
journal = {at - Automatisierungstechnik},
doi = {10.1524/auto.2001.49.11.478},
year = {2001},
}

@article{davis2004effect,
  doi = {10.1103/physreve.69.066110},
  title={Effect of adaptive cruise control systems on traffic flow},
  author={Davis, LC},
  journal={Physical Review E},
  volume={69},
  number={6},
  pages={066110},
  year={2004},
  publisher={APS}
}

@article{makridis2020empirical,
  doi = {10.1177/0361198120911047},
  title={Empirical study on the properties of adaptive cruise control systems and their impact on traffic flow and string stability},
  author={Makridis, Michail and Mattas, Konstantinos and Ciuffo, Biagio and Re, Fabrizio and Kriston, Akos and Minarini, Fabrizio and Rognelund, Greger},
  journal={Transportation Research Record},
  volume={2674},
  number={4},
  pages={471--484},
  year={2020},
  publisher={SAGE Publications Sage CA: Los Angeles, CA}
}

@article{wilson2011car,
  doi = {10.1080/03081060.2011.530826},
  title={Car-following models: fifty years of linear stability analysis -- a mathematical perspective},
  author={Wilson, R Eddie and Ward, Jonathan A},
  journal={Transportation Planning and Technology},
  volume={34},
  number={1},
  pages={3--18},
  year={2011},
  publisher={Taylor \& Francis}
}

@inproceedings{de2004design,
  doi = {10.1109/ivs.2004.1336415},
  title={Design and test of a cooperative adaptive cruise control system},
  author={De Bruin, Dik and Kroon, Joris and Van Klaveren, Richard and Nelisse, Martin},
  booktitle={IEEE Intelligent Vehicles Symposium, 2004},
  pages={392--396},
  year={2004},
  organization={IEEE}
}

@article{milanes2013cooperative,
  doi = {10.1109/tits.2013.2278494},
  title={Cooperative adaptive cruise control in real traffic situations},
  author={Milan{\'e}s, Vicente and Shladover, Steven E and Spring, John and Nowakowski, Christopher and Kawazoe, Hiroshi and Nakamura, Masahide},
  journal={IEEE Transactions on Intelligent Transportation Systems},
  volume={15},
  number={1},
  pages={296--305},
  year = {2014},
  publisher={IEEE}
}

@article{treiber2000congested,
  doi = {10.1103/physreve.62.1805},
  title={Congested traffic states in empirical observations and microscopic simulations},
  author={Treiber, Martin and Hennecke, Ansgar and Helbing, Dirk},
  journal={Physical Review E},
  volume={62},
  number={2},
  pages={1805--1824},
  year={2000},
  publisher={APS}
}

@misc{SimBufferingStrategy,
  title = {Online simulation module: {Stop-and-Go Mitigation via Cooperative Buffering}},
  howpublished = {\url{https://www.vzu.uni-wuppertal.de/fileadmin/site/vzu/Stop-and-Go_Mitigation_via_Cooperative_Buffering.html?speed=0.4}},
  note = {Accessed: 2025-11-03}
}

@techreport{ISO15622:2018,
  title        = {{ISO 15622:2018} Intelligent transport systems --- Adaptive cruise control systems --- Performance requirements and test procedures},
  institution  = {International Organization for Standardization},
  author       = {{International Organization for Standardization}},
  year         = {2018},
  edition      = {3rd},
  address      = {Geneva, Switzerland},
  url          = {https://www.iso.org/standard/71515.html}
}

@article{Chandler1958,
 author = {Chandler,R.E. and Herman,R. and Montroll,E.W.},
 journal = {Op. Res.},
 number = {2},
 pages = {165--184},
 title = {Traffic Dynamics: Studies in Car Following},
 volume = {6},
 year = {1958}
}

@article{lhe2019adaptive,
  doi = {10.1016/j.ifacol.2019.09.004},
  title={Adaptive cruise control strategies implemented on experimental vehicles: A review},
  author={He, Yinglong and Ciuffo, Biagio and Zhou, Quan and Makridis, Michail and Mattas, Konstantinos and Li, Ji and Li, Ziyang and Yan, Fuwu and Xu, Hongming},
  journal={IFAC-PapersOnLine},
  volume={52},
  number={5},
  pages={21--27},
  year={2019},
  publisher={Elsevier}
}

@book{treiber2012traffic,
  doi = {10.1007/978-3-642-32460-4},
  title={Traffic flow dynamics: data, models and simulation},
  author={Treiber, Martin and Kesting, Arne},
  year = {2013},
  publisher={Springer Science \& Business Media}
}

@article{ploeg2013controller,
  doi = {10.1109/tits.2013.2291493},
  title={Controller synthesis for string stability of vehicle platoons},
  author={Ploeg, Jeroen and Shukla, Dipan P and Van De Wouw, Nathan and Nijmeijer, Henk},
  journal={IEEE Transactions on Intelligent Transportation Systems},
  volume={15},
  number={2},
  pages={854--865},
  year = {2014},
  publisher={IEEE}
}

@article{zheng2015stability,
  doi = {10.1109/tits.2015.2402153},
  title={Stability and scalability of homogeneous vehicular platoon: Study on the influence of information flow topologies},
  author={Zheng, Yang and Li, Shengbo Eben and Wang, Jianqiang and Cao, Dongpu and Li, Keqiang},
  journal={IEEE Transactions on Intelligent Transportation Systems},
  volume={17},
  number={1},
  pages={14--26},
  year = {2016},
  publisher={IEEE}
}

@article{orosz2016connected,
  doi = {10.1080/00423114.2016.1193209},
  title={Connected cruise control: modelling, delay effects, and nonlinear behaviour},
  author={Orosz, G{\'a}bor},
  journal={Vehicle System Dynamics},
  volume={54},
  number={8},
  pages={1147--1176},
  year={2016},
  publisher={Taylor \& Francis}
}

@article{korbmacher2025emergent,
  title={Emergent Cooperative Driving Strategies for Stop-and-Go Wave Mitigation via Multi-Agent Reinforcement Learning},
  author={Korbmacher, Raphael and Straub, Daniel and Tordeux, Antoine and Totzeck, Claudia},
  journal={arXiv preprint arXiv:2511.14378},
  year={2025}
}

\end{document}